\pgfplotsset{compat=1.9}
\newtheorem{theorem}{Theorem}
\newtheorem{lemma}[theorem]{Lemma}
\definecolor{goldenrod}{HTML}{FFDF42}
\newcommand{\redline}{\raisebox{1.5pt}{\tikz{\draw[-,red,line width = 1pt](0,0) -- (7mm,0);}}}
\newcommand{\blueline}{\raisebox{1.5pt}{\tikz{\draw [blue, thick,dash pattern={on 5pt off 1pt on 0.85pt off 1.5pt}] (0,0) -- (7mm,0);}}}
\newcommand{\goldline}{\raisebox{1.5pt}{\tikz{\draw[-,color=goldenrod!30!orange,dotted,line width = 1pt](0,0) -- (7mm,0);}}}
\newcommand{\greenline}{\raisebox{1.5pt}{\tikz{\draw[-,black!47!green,dashed,line width = 1pt](0,0) -- (7mm,0);}}}
\newcommand{\blackline}{\raisebox{1.5pt}{\tikz{\draw[-,black,dashed,line width = 1pt](0,0) -- (7mm,0);}}}
\begin{document}

\title{Recursive Dynamic State Estimation for Power Systems with an Incomplete Nonlinear DAE Model}

\author{Milo\v{s} Katani{\'c}, John Lygeros,~\IEEEmembership{Fellow,~IEEE,} Gabriela Hug,~\IEEEmembership{Senior Member,~IEEE}
\thanks{Research supported by the Swiss National Science Foundation under NCCR Automation, grant agreement 51NF40\_180545}
\thanks{Milos Katanic, John Lygeros, and Gabriela Hug are with the Information Technology and Electrical Engineering (D-ITET), ETH Zurich, 8092 Zurich, Switzerland (e-mail: mkatanic@ethz.ch; jlygeros@ethz.ch; hug@eeh.ee.ethz.ch).}
}



\maketitle

\begin{abstract}
Power systems are highly complex, large-scale engineering systems subject to many uncertainties, which makes accurate mathematical modeling challenging. This paper proposes a novel, centralized dynamic state estimator for power systems that lack models of some components. Including the available dynamic evolution equations, algebraic network equations, and phasor measurements, we apply the least squares criterion to estimate all dynamic and algebraic states recursively. The approach results in an algorithm that generalizes the iterated extended Kalman filter and does not require static network observability. We further derive a graph theoretic condition for placing phasor measurement units that guarantees the uniqueness of the solution.  A numerical study evaluates the performance under short circuits in the network and load changes and shows superior tracking performance compared to robust procedures from the literature within computational times that are feasible for real-time application. 
\end{abstract}

\begin{IEEEkeywords}
Dynamic state estimation, Kalman filtering, under-determined DAE model
\end{IEEEkeywords}

\section{Introduction}
Static state estimation (SSE) is an important component of energy management systems in power system control centers \cite{abur}. Traditionally, SSE relies on real-time measurement data obtained by the Supervisory Control and Data Acquisition (SCADA) system to calculate the most probable system state. Thereby, the system dynamics are disregarded due to slow scan rates and the asynchronous nature of the SCADA measurements. \par
Recent years have witnessed an increase in the deployment of phasor measurement units (PMUs), devices that directly measure current and voltage phasors at high sampling rates and time-stamp those measurements against a time reference obtained by the Global Positioning System. The advent of PMUs promoted the research interest in dynamic state estimation (DSE) for power systems. Even though the number of PMUs is constantly increasing, the growth rate in the North American power grid may have slowed down in the last few years, according to \cite{PMU_report}. This fact motivates the development of state estimation methods that do not rely on static network observability by PMUs. 
\subsection{Related work}
DSE for power systems has a rich and growing literature. Initial works considered only the swing dynamics of generators and completely neglected voltage dynamics in the system \cite{scholtz_thesis}. Subsequently, the research community focused on the integration of dynamic models of generators and their controllers, thus still neglecting the estimation of the algebraic states (complex nodal voltages) \cite{pal}. Along this line, the most represented methods are the nonlinear variants of the Kalman filter: the extended Kalman filter \cite{ekf}, the unscented Kalman filter \cite{Qi_unsc}, and the particle filter \cite{particle}. Because of the inherent complexity of dealing with large-scale, nonlinear differential-algebraic equations (DAEs), researchers have leveraged \mbox{index-1} characteristics of power system models to first propagate the differential states and subsequently solve the algebraic equations to obtain the voltage estimates \cite{sakis}. Another simplification commonly employed is the Kron reduction \cite{c_dse}, \cite{c_dse2}, which models the loads as prespecified impedances. An alternative approach that assumes static network observability is to use two-stage estimators where, in the first stage, only algebraic states are calculated, and subsequently, generators' dynamic states are estimated \cite{abur_dynamic}. None of the referenced works aims to estimate differential and algebraic states simultaneously, a comprehensive approach that ensures the optimality of all state estimates. \par
The implementation of DSE directly on the nonlinear DAE models to jointly estimate both differential and algebraic states is relatively recent. Along this line, \cite{nadeem} proposes a continuous time $H_{\inf}$ observer to simultaneously track differential and algebraic states. An extension to this observer to include renewables generation is provided in \cite{taha1}. Reference \cite{limits} shows that inequality constraints can be explicitly incorporated into the least squares estimation by reformulating them into equality constraints using complementarity theory. However, these approaches assume the availability of the complete system model, including all generators, loads, and the grid. In our previous work \cite{Milos}, we present initial efforts to address the problem of partially known system models in a moving-horizon fashion, albeit with a substantial computational burden.\par
We argue that one of the main challenges of the centralized DSE is the availability and complexity of accurate mathematical models \cite{motivation}. Namely, power systems are inherently complex, heterogeneous systems of transcontinental scale built incrementally over many decades. Consequently, they are often modeled by nonlinear equations of large dimensions spanning multiple time scales. Therefore, obtaining accurate models of all components is a daunting task. In addition, different areas of a power system are often operated by independent utilities, which may be reluctant to share models and measurement data. This problem may be exacerbated by the penetration of renewable energy sources as they are usually connected to the grid via vendor-specific power electronic converters, whose behavior is, to a large extent, dependent on the employed control algorithms. Above all, various unanticipated disturbances and faults can occur, such as short circuits and line or load disconnections. It is unrealistic to assume that the operators can have real-time knowledge of all changes/disturbances occurring in the system.  This fact makes it likely for operators to work with only partially known power system models. \par
The aforementioned difficulty has been addressed in the literature using decentralized state estimation or simplified models with robust estimators to subdue the effect of innovation or observation outliers. Decentralized state estimation aims to estimate the states of each generator separately \cite{abhinav, Singh}. The limitation of this method is that it requires the placement of a PMU on each generator terminal. In addition, the voltages at nodes where no generator is connected cannot be estimated. The second approach leverages short-term injection forecasts, historical records, or similar approximation methods to model loads or renewables infeed together with robust estimation schemes \cite{zhao}, \cite{nadeem}. This approach is similar to adding pseudo measurements with high covariance noise to SSE to restore observability when the network is unobservable. To deal with model and measurement uncertainties and unknown generator inputs, various robust estimators have been proposed \cite{robust3, robust4, robust5, robust6, robust_new}. However, robust estimation methods are usually computationally more expensive and more susceptible to numerical instabilities than ordinary least squares, which may prove prohibitive for real-time application.\par
References \cite{neuro} and \cite{GOLEIJANI} address the problem of an incomplete power system model, but both works are based on artificial neural networks; hence, their out-of-sample performance is still to be validated, while \cite{Abooshahab2022} deals with a simplified, linear power system model. To the best of our knowledge, no method for optimal DSE for power systems with a partially known nonlinear DAE model exists at present. 
\subsection{Contributions}
This paper aims to fill the above-mentioned research gap. Specifically, the contributions can be stated as follows:
\begin{itemize}
    \item We propose a centralized, recursive, least squares estimator for algebraic and dynamic state estimation for power systems with a partially known model. The method allows one to isolate a part of the system and estimate differential and algebraic states within without making any assumptions about the rest of the system. To derive the recursion, we extend the Kalman filtering for under-determined linear DAE systems from \cite{ishihara} to nonlinear cases. This results in a generalization of the iterated extended Kalman filter (IEKF). Moreover, it provides the possibility of adding new or removing existing models. Thus, for example, once a dynamic model of a generator or a load becomes available, it can be integrated into the estimation framework without redesigning the estimator. And vice versa, if a model proves suspicious, it can be omitted as long as estimability is still guaranteed;
    \item This paper achieves unique state reconstruction and introduces the notion of \textit{topological estimability}, which provides a location-dependent minimum number of real-time measurements that must be placed in the system. Estimability provides a guarantee that all system states can be uniquely estimated, given the available models and measurements (see \cite{kalman_descriptor} for a rigorous definition). The proposed graph theoretic approach is purely topological, independent of system parameters or operating points. We show that estimability can be achieved with fewer measurements than static network observability. In addition, there is no requirement to place PMUs at any specific node type. 
\end{itemize}

\subsection{Outline}
The remainder of the paper is structured as follows. Section \ref{Estimation} proposes the estimation algorithm. Section \ref{Estimability} addresses the topological estimability condition. In Section \ref{Results}, simulation results are discussed. Finally, Section \ref{Conclusion} concludes the paper.

\section{Estimation}
\label{Estimation}

\subsection{Power System Modeling}
\label{subsec_ekf}
In this work, we consider a power system that lacks models of some components. We call these devices unknown injectors, and they can be static or dynamic components that inject positive or negative power into the grid. They can be generators or loads with unknown models or, for example, distribution grids whose models, parameters, and real-time measurements are inaccessible to transmission system operators. Due to unknown injectors, the employed power system estimation model is incomplete and is given by the following continuous-time, stochastic, under-constrained, nonlinear, semi-explicit DAE system:
\begin{subequations}
\label{eq:power_system}
\begin{equation}
    \dot{\bm{{{y}}}} = \Tilde{\bm{f}}(\bm{{{y}}}, \bm{v}) + \Tilde{\bm{\xi}}_d,
    \label{eq:differential}
    \end{equation}
    \begin{equation}
    \bm{0} = \Tilde{\bm{g}}(\bm{{{y}}}, \bm{v}) + \Tilde{\bm{\xi}}_a,
    \label{eq:algebraic}
    \end{equation}
\end{subequations}
where $\bm{{{y}}} \in \mathbb{R}^{n_d}$ and $\bm{v} \in \mathbb{R}^{n_a}$ are respectively the vectors of the differential and algebraic states and $\Tilde{\bm{f}}: \mathbb{R}^{n_d + n_a} \to \mathbb{R}^{n_d}$ and $\Tilde{\bm{g}}: \mathbb{R}^{n_d + n_a} \to \mathbb{R}^{{n}_g}$, with ${n}_g<n_a$, are nonlinear, differentiable, Lipschitz continuous functions representing the time evolution and algebraic nodal current balance equations in rectangular coordinates (see \cite{model_realimag} for details). $\Tilde{\bm{\xi}}_d \in \mathbb{R}^{n_d}$ and $\Tilde{\bm{\xi}}_a \in \mathbb{R}^{{n}_g}$ are continuous-time Gaussian white noise vectors of independent random variables with  $\Tilde{\bm{\xi}}_d \sim (\bm{0}, \Tilde{\bm{Q}}_d)$ and $\Tilde{\bm{\xi}}_a \sim (\bm{0}, \Tilde{\bm{Q}}_a)$, where $\Tilde{\bm{Q}}_d \in \mathbb{R}^{n_d \times n_d}$ and $\Tilde{\bm{Q}}_a \in \mathbb{R}^{{n}_g \times {n}_g}$ are positive definite noise covariances. Internal states of individual dynamic models are coupled through the network, which is assumed to be connected. Inputs (set points) are not explicitly stated in (\ref{eq:differential})--(\ref{eq:algebraic}), but are regarded as known parameters. As the number of states is larger than the number of equations,  this system of equations is under-determined, and hence, the initial value problem (IVP) is not uniquely solvable. Therefore, we rely on time-discrete measurements to uniquely infer the system trajectory. Note that as a consequence, the standard approaches to Kalman filtering for nonlinear DAE systems from the literature \cite{mandela} cannot be applied as they require the invertibility of the Jacobian $\frac{\partial  \Tilde{\bm{g}} }{\partial \bm{v}}$ in (\ref{eq:algebraic}). \par
One may be tempted to complete the power system model by modeling the unknown components with their expected injection obtained from historical values or other forecasting methods. However, in this paper, we show that wrong forecasts can cause the estimation to converge to a wrong system state even if robust estimation methods are used. Such hour or day-ahead forecasts are especially unreliable during faults or abnormal operating conditions, during which it is arguably of the utmost importance for DSE to deliver unbiased state estimates. \par
We use the following discrete-time representation:
\begin{subequations}
\begin{align}
    \bm{{{y}}}_{k } &= \bm{f}(\bm{{{y}}}_{k-1}, \bm{v}_{k-1}, \bm{{{y}}}_{k}, \bm{v}_{k})h + \bm{{{y}}}_{k-1} + \bm{\xi}_{d,k}, \label{eq:dae_discrete_diff}\\
    \bm{0} &= \bm{g}(\bm{{{y}}}_{k}, \bm{v}_{k}) + \bm{\xi}_{a,k},\label{eq:dae_discrete_alg}\\
    \bm{z}_{k} &= \begin{bmatrix} \bm{0} & \bm{C}_2 \end{bmatrix} \begin{bmatrix} \bm{{y}}_k\\ \bm{v}_{k}\end{bmatrix} + \bm{\nu}_{k} \label{eq:dae_discrete_meas},
\end{align}
\end{subequations}
where $\bm{f}:  \mathbb{R}^{2n_d + 2n_a} \to \mathbb{R}^{n_d}$ represents the discretized dynamics of (\ref{eq:differential}), ${\bm{g}}(\cdot)\equiv \Tilde{\bm{g}}(\cdot)$, $\bm{z}_{k} \in \mathbb{R}^{n_m}$ collects all PMU measurements, $\bm{C} := \begin{bmatrix} \bm{0} & \bm{C}_2 \end{bmatrix} \in \mathbb{R}^{n_m \times (n_d + n_a)}$ is the output matrix of the linear PMU measurement function in rectangular coordinates \cite{linear_PMU}, $h \in \mathbb{R}^+$ denotes the discretization step size, and the subscript $k \in \mathbb{Z}$ denotes the time instant. Algebraic equations are written for the following time step $k$. Discrete-time noise vectors $\bm{{\xi}}_{d,k} \in \mathbb{R}^{n_d}$, $\bm{{\xi}}_{a,k} \in \mathbb{R}^{n_g}$, and $\bm{{\nu}}_{k} \in \mathbb{R}^{{n}_m}$ are assumed to be independent Gaussian zero mean with
\begin{equation}
\mathbb{{E}}\left[
        \begin{bmatrix}
            \bm{{\xi}}_{d,k}\\ \bm{{\xi}}_{a ,k} \\\bm{{\nu}}_k
        \end{bmatrix}\begin{bmatrix}
             \bm{{\xi}}_{d,j}\\ \bm{{\xi}}_{a ,j} \\\bm{{\nu}}_j
        \end{bmatrix}^T \right]=\begin{bmatrix}
            \bm{{Q}}_{d} & \bm{0} & \bm{0} \\ \bm{0} & \bm{{Q}}_{a} & \bm{0}\\
            \bm{0} & \bm{0} & \bm{R}
        \end{bmatrix}\delta(k-j),
\end{equation}
where $\bm{{Q}}_{d} \in \mathbb{R}^{n_d \times n_d} \succ 0$, $\bm{{Q}}_{a} \in \mathbb{R}^{n_g \times n_g} \succ 0$, $\bm{{R}} \in \mathbb{R}^{n_m \times n_m} \succ 0$. Note that the process noise $\bm{{\xi}}_{d,k}$ may be non-zero mean as it also incorporates a discretization error, which is difficult to quantify. This error, in general, also depends on the voltage evolution between the measurement samples; however, for deriving the optimal recursive filter, we ignore these attributes. 
In (\ref{eq:dae_discrete_meas}), conventional measurement devices, such as remote terminal units, are not considered as their slow scan rate and the lack of synchronization hinder their employment during transient conditions \cite{motivation}. Formulation (\ref{eq:dae_discrete_diff}) subsumes several one-step discretization methods, such as the explicit Euler, the implicit Euler, and the trapezoidal rule. Note that it is not possible to directly accommodate multi-stage discretization schemes (such as Runge-Kutta methods). The reason is that the power system model is under-determined; therefore, online measurements, which arrive at pre-determined sampling points, are necessary to guarantee the uniqueness of the states' trajectories. However, no measurements are available at intermediary stages introduced by the discretization methods. On the other hand, employing multi-step methods would undermine the advantage of recursive estimation as it would require storing multiple previous state estimates in the computer memory.
\subsection{Dynamic State Estimation Algorithm}
To derive the optimal filter for nonlinear systems, we extend the approach from \cite{ishihara}, outlined in Appendix \ref{subsec_kf}, by first linearizing the dynamic equations and subsequently applying the optimal recursive filtering to the obtained affine system. Furthermore, we propose to apply the scheme several times iteratively, where each iteration yields a new (better) state estimate and, hence, a better linearization point. In the literature, such an iterative estimation procedure is called the iterated extended Kalman filter (IEKF) \cite{simon2006optimal}. 
The idea is that the successive linearization will render the linearization error negligible compared to noise. The approach is similar to applying the Gauss-Newton method to solve the original nonlinear optimization problem \cite{bell}. Denoting $\bm{\hat{{y}}}_{k|j}, k\leq j$, as the estimate of $\bm{{{y}}}_{k}$ once $\bm{{z}}_{j}$ has been processed, we assume that $(\bm{\hat{{{y}}}}_{k-1|k-1},  \bm{\hat{v}}_{k-1|k-1})$, and its covariance  $\bm{P}_{k-1|k-1}  \in \mathbb{R}^{(n_d + n_a) \times (n_d + n_a)} \succ 0$ are available.\par 
First note that (\ref{eq:dae_discrete_diff})--(\ref{eq:dae_discrete_alg}) need to be linearized around the previous estimate $(\hat{\bm{{{y}}}}_{k-1|k-1}, \bm{\hat{v}}_{k-1|k-1})$, but also around the current iterate $i$ of the current estimate $\left(\hat{\bm{{{y}}}}_{k|k}^{(i)}, \bm{\hat{v}}_{k|k}^{(i)}\right)$. The initial iterate of the current time step can be obtained by setting it equal to the previous estimate, i.e., $\hat{\bm{{{y}}}}_{k|k}^{(0)}=\hat{\bm{{{y}}}}_{k-1|k-1}, \quad \bm{\hat{v}}_{k|k}^{(0)}=\bm{\hat{v}}_{k-1|k-1}.$
The Taylor series linearization of  (\ref{eq:dae_discrete_diff})--(\ref{eq:dae_discrete_alg}) around $\left({\bm{\hat{{{y}}}}_{k-1\vert k-1}, \bm{\hat{v}}_{k-1\vert k-1},\bm{\hat{{{y}}}}^{(i)}_{k \vert k}, \bm{\hat{v}}^{(i)}_{k\vert k}}\right)$ gives: 
\begin{align}
    \bm{{{y}}}_{k} \approx &~ h\bm{f}\left({\bm{\hat{{{y}}}}_{k-1\vert k-1}, \bm{\hat{v}}_{k-1\vert k-1},\bm{\hat{{{y}}}}^{(i)}_{k \vert k}, \bm{\hat{v}}^{(i)}_{k\vert k}}\right) \nonumber  \\
    &+ h\frac{\partial\bm{ f}}{\partial \bm{{{y}}}_{k-1}}  ( \bm{{{y}}}_{k-1} \!-\! \bm{\hat{{{y}}}}_{k-1\vert k-1} ) \nonumber\\
     &+h\frac{\partial\bm{ f}}{\partial \bm{v}_{k-1}} ( \bm{v}_{k-1} - \bm{\hat{v}}_{k-1\vert k-1} ) \nonumber \\
     &+ h\frac{\partial\bm{ f}}{\partial \bm{{{y}}}_{k}}  \left( \bm{{{y}}}_{k} - \bm{\hat{{{y}}}}^{(i)}_{k\vert k} \right)+
     h\frac{\partial\bm{ f}}{\partial \bm{v}_{k}}  \left( \bm{v}_{k}\! - \bm{\hat{v}}^{(i)}_{k\vert k} \right) \nonumber  \\
      &+ \bm{{{{y}}}}_{k-1}  +\bm{\xi}_{d,k},
      \label{eq:iekf_diff}\\
    \bm{0} \approx&~ \bm{g}\left(\bm{\hat{{{y}}}}^{(i)}_{k\vert k}, \bm{\hat{v}}^{(i)}_{k \vert k}\right) +
    \frac{\partial\bm{ g}}{\partial \bm{{{y}}}_{k}}\left( \bm{{{y}}}_{k} - \bm{\hat{{{y}}}}^{(i)}_{k\vert k} \right) \nonumber  \\&+ \frac{\partial\bm{ g}}{\partial \bm{v}_{k}}\left( \bm{v}_{k} - \bm{\hat{v}}^{(i)}_{k\vert k} \right)+\bm{\xi}_{a,k}. 
    \label{eq:iekf_alg}
\end{align}

In what follows, we omit the iteration superscript $i$ as the analysis focuses on one single iteration of the algorithm. Introducing auxiliary variables for the Jacobians
\begin{align}
    \bm{A}_{1,k-1} &:= \frac{\partial\bm{ f}}{\partial \bm{{{y}}}_{k-1}} \in \mathbb{R}^{n_d \times n_d},& \bm{A}_{2,k-1}&:= \frac{\partial\bm{ f}}{\partial \bm{v}_{k-1}}\in \mathbb{R}^{n_d \times n_a}, \nonumber \\
    \bm{E}_{1,k} &:= \frac{\partial\bm{ f}}{\partial \bm{{{y}}}_{k}}\in \mathbb{R}^{{n}_d \times n_d}, &\bm{E}_{2,k} &:= \frac{\partial\bm{ f}}{\partial \bm{v}_{k}}\in \mathbb{R}^{{n}_d \times n_a}, \nonumber \\
    \bm{E}_{3,k} &:= \frac{\partial\bm{g}}{\partial \bm{{{y}}}_{k}}\in \mathbb{R}^{{n}_g \times n_d},& \bm{E}_{4,k} &:= \frac{\partial\bm{g}}{\partial \bm{v}_{k}}\in \mathbb{R}^{{n}_g \times n_a}\label{jac_E2}, \nonumber
\end{align}
and concatenating 
\begin{align}\bm{x}_k &:= \begin{bmatrix} \bm{{{y}}}_k\\ \bm{v}_k  \end{bmatrix}  \in \mathbb{R}^{n_d + n_a}, \bm{\xi}_k := \begin{bmatrix}
    \bm{\xi}_{d,k} \nonumber \\ \bm{\bm{\xi}}_{a,k}
\end{bmatrix}  \in \mathbb{R}^{n_d+n_g},\\ \bm{Q} &:= \begin{bmatrix} \bm{Q}_{d} & \bm{0} \\ \bm{0} & \bm{Q}_{a} \end {bmatrix}  \in \mathbb{R}^{(n_d + {n}_g) \times (n_d + {n}_g)},
\end{align}
allows us to rewrite (\ref{eq:iekf_diff})--(\ref{eq:iekf_alg}) and (\ref{eq:dae_discrete_meas}) as
\begin{align}
    \bm{E}_{k}\bm{x}_{k} &\approx \bm{A}_{k-1}\bm{x}_{k-1} + \bm{\Delta}_k + \bm{\xi}_k, \\
    \bm{z}_{k} &= \bm{C} \bm{x}_{k} + \bm{\nu}_{k}, 
\end{align}
where $\bm{E}_{k}, \bm{A}_{k-1},$ and $\bm{\Delta}_k$ are defined in (\ref{eq:long}). Note that because of the choice of the discrete-time representation (\ref{eq:dae_discrete_diff})-(\ref{eq:dae_discrete_alg}), $\bm{E}_{3,k}$ and $\bm{E}_{4,k}$ are not zero matrices, which would be the case in the continuous time in (\ref{eq:power_system}). \par
\begin{figure*}[bp]
\rule{\textwidth}{0.5pt}
\begin{align}
\bm{E}_{k} &:= \begin{bmatrix}h\bm{E}_{1,k} - \bm{I} & h\bm{E}_{2,k} \\ \bm{E}_{3,k} & \bm{E}_{4,k} \end{bmatrix} \in \mathbb{R}^{(n_d+{n}_g) \times (n_d+{n}_a)},
\bm{A}_{k-1} := \begin{bmatrix} -h\bm{A}_{1,k-1} - \bm{I} & -h\bm{A}_{2,k-1}\\ \bm{0} & \bm{0} \end{bmatrix} \in \mathbb{R}^{(n_d+{n}_g) \times (n_d+{n}_a)}, \nonumber \\
       \bm{\Delta}_k &:= \begin{bmatrix} h\bm{E}_{1,k}\bm{\hat{y}}_{k\vert k} + h\bm{E}_{2,k}\bm{\hat{v}}_{k \vert k} +h\bm{A}_{1,k-1}\bm{\hat{y}}_{k-1\vert k-1} + h\bm{A}_{2,k-1}\bm{\hat{v}}_{k-1 \vert k-1}- h\bm{f}({\bm{\hat{{{y}}}}_{k-1\vert k-1}, \bm{\hat{v}}_{k-1\vert k-1},\bm{\hat{{{y}}}}^{(i)}_{k \vert k}, \bm{\hat{v}}^{(i)}_{k\vert k}}) \\   \bm{E}_{3, k} \bm{\hat{y}}_{k \vert k} + \bm{E}_{4,k} \bm{\hat{v}}_{k \vert k}  -\bm{g}(\bm{\hat{{{y}}}}^{(i)}_{k \vert k}, \bm{\hat{v}}^{(i)}_{k\vert k}) \end{bmatrix} \in \mathbb{R}^{n_d+{n}_g}. \label{eq:long}      
\end{align}
\end{figure*}
We can now formulate the optimization problem to calculate the least-squares estimate at time step $k$, in a similar way as for linear systems (see (\ref{opt_pdf}) in Appendix \ref{subsec_kf})
\begin{equation}
    \min_{\mathfrak{X}} \quad (\mathfrak{A}\mathfrak{X}-\mathfrak{B})^T\mathfrak{R}(\mathfrak{A}\mathfrak{X}-\mathfrak{B}), \label{opt_ekf}
\end{equation}
where
\begin{align}
\mathfrak{A}&:=\begin{bmatrix}\bm{E}_k & \bm{A}_{k-1}\\ \bm{C} &\bm{0} \\ \bm{0}& \bm{I} \end{bmatrix}, \mathfrak{B}:=\begin{bmatrix}\bm{\Delta}_k\\ \bm{z}_k \\ - \bm{\hat{x}}_{k-1|k-1} \end{bmatrix}, \nonumber \\
\mathfrak{R} &:= \begin{bmatrix} \bm{Q}^{-1} & \bm{0} & \bm{0} \\ \bm{0} & \bm{R}^{-1} & \bm{0} \\ \bm{0} & \bm{0} & \bm{P}^{-1}_{k-1|k-1} \end{bmatrix}, \mathfrak{X} : = \begin{bmatrix} \bm{x}_k \\ -\bm{x}_{k-1}\end{bmatrix}.
\end{align}
Compared to the linear case in (\ref{opt_pdf}), the additional term $\bm{\Delta}_k$ is introduced by the linearization. Under the assumption $\left[\begin{smallmatrix}\bm{E}_k\\\bm{C} \end{smallmatrix} \right]$ has full column rank, the minimum is attained at
\begin{equation}
\hat{\mathfrak{X}}=(\mathfrak{A}^T\mathfrak{R}\mathfrak{A})^{-1}\mathfrak{A}^T\mathfrak{R} \mathfrak{B},
\end{equation}
or
\begin{equation}
\begin{aligned}
    \begin{bmatrix} \bm{\hat{x}}_{k|k} \\ \bm{\hat{x}}_{k-1|k} \end{bmatrix}= &\begin{bmatrix} \bm{P}_{k|k} & \bm{P}_{k,k-1|k}\\ \bm{P}_{k,k-1|k}^T & \bm{P}_{k-1|k} \end{bmatrix}\cdot \\
    &\begin{bmatrix}\bm{C}^T \bm{R}^{-1}\bm{z}_k + \bm{E}_k^T\bm{Q}^{-1}\bm{\Delta}_k \\ -\bm{P}_{k-1|k-1}^{-1} \bm{\hat{x}}_{k-1|k-1} + \bm{A}_{k-1}^T\bm{Q}^{-1}\bm{\Delta}_k\end{bmatrix}.
    \label{kalman_big_matrix}
\end{aligned}
\end{equation}
To simplify the notation, we drop the subscript after the vertical bar, as it is implicitly clear from the time step at which an estimate was obtained. As we are only interested in the estimation of the most recent state of the system $\bm{\hat{x}}_{k}$, rewriting the terms from the linear case (\ref{state_kalman}) and including the additional terms introduced by the linearization gives
\begin{align}
    \bm{\hat{x}}_{k}\overset{(1)}{=} &~\bm{P}_{k}\bm{E}^T(\bm{Q} + \bm{A}\bm{P}_{k-1}\bm{A}^T)^{-1}\bm{A}\bm{\hat{x}}_{k-1} \nonumber \\
    &+ \bm{P}_{k}\bm{C}^T\bm{R}^{-1}\bm{z}_k + \bm{P}_{k}\bm{E}^T\bm{Q}^{-1}\bm{\Delta }_k \nonumber \\
    &- \bm{P}_{k}\bm{E}^T(\bm{Q} + \bm{A}\bm{P}_{k-1}\bm{A}^T)^{-1}\bm{A}\bm{P}_{k-1}\bm{A}^T\bm{Q}^{-1}\bm{\Delta}_k \nonumber\\
    \overset{(2)}{=} &~\bm{P}_{k}\bm{E}^T(\bm{Q} + \bm{A}\bm{P}_{k-1}\bm{A}^T)^{-1}\bm{A}\bm{\hat{x}}_{k-1} \nonumber \\
    &+ \bm{P}_{k}\bm{C}^T\bm{R}^{-1}\bm{z}_k + \bm{P}_{k}\bm{E}^T \cdot \nonumber\\
    &\left\{\bm{Q}^{-1} - (\bm{Q} + \bm{A}\bm{P}_{k-1}\bm{A}^T)^{-1}\bm{A}\bm{P}_{k-1}\bm{A}^T\bm{Q}^{-1}\right\}\bm{\Delta}_k \nonumber \\
    \overset{(3)}{=} &~\bm{P}_{k}\bm{E}^T(\bm{Q} + \bm{A}\bm{P}_{k-1}\bm{A}^T)^{-1}\bm{A}\bm{\hat{x}}_{k-1} \nonumber \\
    &+ \bm{P}_{k}\bm{C}^T\bm{R}^{-1}\bm{z}_k + \bm{P}_{k|k}\bm{E}^T \Bigl\{(\bm{Q} + \bm{A}\bm{P}_{k-1}\bm{A}^T)^{-1} \cdot \nonumber \\
    &\left[(\bm{Q} + \bm{A}\bm{P}_{k-1}\bm{A}^T)\bm{Q}^{-1}- \bm{A}\bm{P}_{k-1}\bm{A}^T\bm{Q}^{-1}\right]\Bigl\}\bm{\Delta}_k \nonumber\\
      \overset{(4)}{=} &~\bm{P}_{k}\bm{E}^T(\bm{Q} + \bm{A}\bm{P}_{k-1}\bm{A}^T)^{-1}\bm{A}\bm{\hat{x}}_{k-1} \nonumber \\
    & +\bm{P}_{k}\bm{C}^T\bm{R}^{-1}\bm{z}_k + \bm{P}_{k}\bm{E}^T(\bm{Q} + \bm{A}\bm{P}_{k-1}\bm{A}^T)^{-1} \bm{\Delta}_k.  
\end{align}
The second equality follows from factoring out $\bm{P}_{k}\bm{E}^T$ and $\bm{\Delta}_k$ in the third and fourth summand, the third equality follows by factoring out $(\bm{Q} + \bm{A}\bm{P}_{k-1}\bm{A}^T)^{-1}$ in the curly brackets, and the fourth equality follows from the associativity of matrix multiplication. Hence, the final expression is given by the following recursion:
\begin{align}
    \bm{\hat{x}}_{k}= &~\bm{P}_{k}\bm{E}_k^T(\bm{Q} + \bm{A}_{k-1}\bm{P}_{k-1}\bm{A}_{k-1}^T)^{-1} \cdot \nonumber\\
     &~ (\bm{A}\bm{\hat{x}}_{k-1} + \bm{\Delta}_k)+\bm{P}_{k}\bm{C}^T\bm{R}^{-1}\bm{z}_k, \nonumber\\
     \bm{P}_{k}^{-1} = &~\bm{E}_k^T(\bm{Q} + \bm{A}_{k-1}\bm{P}_{k-1}\bm{A}_{k-1}^T)^{-1}\bm{E}_k + \bm{C}^T\bm{R}^{-1} \bm{C}.
     \label{covariance_ekf}
\end{align}
If $\bm{f}$ and $\bm{g}$ are linear, $\bm{\Delta}_k=\bm{0}$, and the recursion coincides with (\ref{state_kalman}). \par
Finally, we execute the algorithm iteratively in the form of the IEKF. Applying (\ref{covariance_ekf}) gives the new estimate iterate $\bm{\hat{x}}_{k}^{(i)}$. The Jacobians can be reevaluated for this estimate, and the procedure repeated until convergence; as a termination criterion we can take
$
    \left\| \bm{\hat{x}}_{k}^{(i)}- \bm{\hat{x}}_{k}^{(i-1)}\right \|_{\infty}\leq\epsilon,
$ for a small $\epsilon >0$. Algorithm \ref{alg1} provides the pseudo-code of the proposed IEKF for time step $k$.\par
\begin{algorithm}[!ht]
\caption{IEKF for under-determined DAE systems}
\label{alg1}
\textbf{Input:}  $\hat{\bm{x}}_{k-1|k-1}, \bm{P}_{k-1|k-1}$\\
\textbf{Output:}  $\hat{\bm{x}}_{k|k}, \bm{P}_{k|k}$
\begin{algorithmic}
\State{Calculate $\bm{A}_{k-1}$ from (\ref{eq:long})}
\State{$i \gets 0$}
\State{$\hat{\bm{x}}_{k|k}^{(0)} \gets \hat{\bm{x}}_{k-1|k-1}$}
\Do
\State {$i \gets i+1$}
\State \text{Calculate $\bm{E}_{k}^{(i)}, \bm{\Delta}_k^{(i)}$ from (\ref{eq:long})} 
\State {Apply (\ref{covariance_ekf}) to calculate $\hat{\bm{x}}_{k|k}^{(i)}$ and $\hat{\bm{P}}_{k|k}^{(i)}$}
\doWhile{$ \left\|\bm{\hat{x}}_{k|k}^{(i)}-\bm{\hat{x}}_{k|k}^{(i-1)}\right \|_{\infty}>\epsilon$}
\State{$\hat{\bm{x}}_{k|k} \gets \hat{\bm{x}}_{k|k}^{(i)}$}
\State{$\hat{\bm{P}}_{k|k} \gets \hat{\bm{P}}_{k|k}^{(i)}$}
\end{algorithmic}
\end{algorithm}

\section{Estimability analysis}
\label{Estimability}
The developments of the previous section rely on the assumption of full column rank of matrix $\left[\begin{smallmatrix}\bm{E}_k\\\bm{C} \end{smallmatrix} \right]$, also called the estimability condition \cite{kalman_descriptor}. Estimabilty refers to the ability to uniquely determine all state estimates given the dynamics and the measurement equations, or in other words: it refers to the existence of the unique minimizer of (\ref{opt_ekf}). If not satisfied, the Kalman filter information matrix, $\bm{P}_k^{-1}$ in (\ref{covariance_ekf}), is singular. It follows from \cite{kalman_descriptor} that the full column rank of $\left[\begin{smallmatrix}\bm{E}_k\\\bm{C} \end{smallmatrix} \right]$ is sufficient and necessary for estimability. This condition is also known as fast subsystem observability in the linear descriptor systems literature \cite{dai2014singular}. Note that both estimability and detectability assumptions are necessary for the convergence of the Kalman filter \cite{kalman_descriptor}. In this work, we limit our attention to estimability and leave detectability for future work.
\subsection{Graphs of Structured Matrices}
The estimability criterion ensures, statistically speaking, the availability of sufficient amount of information to uniquely estimate all differential $\bm{y}_k$ and algebraic states $\bm{v}_k$ at all times. A direct computation of the rank of the given matrix is not only computationally expensive for large power systems, but more importantly, it does not provide any intuition about why the given sensor placement does not guarantee estimability nor where additional sensors should be placed to restore it. Instead, inspired by \cite{sse_topology} and \cite{scholtz}, where a topological criterion for static observability and a dynamic observer based on graph theory are presented, we aim to develop an algorithm that can be applied to any network and does not depend on the specific network parameters but only on the topological structure of the underlying grid. To this end, we introduce the concept of structured matrices.\par 
A structured matrix is a matrix whose elements are either fixed zeros or non-zero indeterminate (free) parameters.  The generic rank of a structured matrix is defined as the maximal rank it can reach as a function of its free parameters. In \cite{EC_full}, it was shown that the rank of a given matrix equals its generic rank except for pathological cases with Lebesgue measure zero. Such cases are unlikely to occur in practice, and even if they did, an arbitrary small perturbation of the parameters restores the maximal rank \cite{EC_full}. In SSE for power systems, such pathological cases are referred to as parametric unobservability \cite{sse_topology}. However, if the measurement Jacobian is generically full rank, the system is still called topologically observable. Analogously, we refer to the full generic rank of $\left[\begin{smallmatrix} \bm{E}_k \\ \bm{C} \end{smallmatrix}\right]$ as the \textit{topological estimability} criterion for the proposed DSE. \par 
We introduce a structured system as a pair of structured matrices $(\mathcal{E}, \mathcal{C})$, where $\mathcal{E} \in \mathbb{R}^{n \times n}$ and $\mathcal{C} \in \mathbb{R}^{m \times n}$. To each structured system, a directed graph can be associated in the following way: the graph of the given pair $(\mathcal{E},\mathcal{C})$ contains \mbox{$n+m$} vertices, denoted $v_1, v_2, \ldots, v_{n+m}$. For each free parameter $a_{ij}$ where $i \in 1, \ldots, n\!+\!m$ and $j \in 1, \ldots, n$ of the structured system, there exists an oriented edge from $v_j$ to $v_i$ in the graph representation. To facilitate the understanding, let us introduce an example of a structured pair.\par
\textit{Example}: Let
\begin{equation}
    \mathcal{E} = \begin{bmatrix}
    0 & a_{12} & 0\\
    0 & a_{22} & a_{23}\\
    a_{31} & 0& 0 
    \end{bmatrix}, \quad
    \mathcal{C} = \begin{bmatrix}
    0&0&a_{43}
    \end{bmatrix}.
\end{equation}
The associated graph is given in Figure \ref{fig:toygraph}.
\begin{figure}
    \centering
    \includegraphics[scale=0.6]{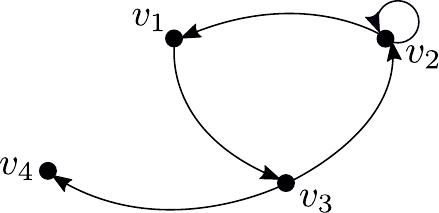}
    \caption{Graph of the structured pair $(\mathcal{E}, \mathcal{C})$}
    \label{fig:toygraph}
\end{figure}
Notice that the vertices corresponding to the rows of $\mathcal{C}$ are all sink vertices as no edge points outwards from them; we call all paths ending in such vertices $\mathcal{C}$-topped paths.\par
We now state a theorem from \cite{EC_full} useful for analyzing the topological estimability of power systems.
\begin{theorem}
\cite[Theorem 1]{EC_full} Given structured matrices $\mathcal{E} \in \mathbb{R}^{n \times n}$ and $\mathcal{C} \in \mathbb{R}^{m \times n}$, the matrix $\begin{bmatrix} \mathcal{E} \\ \mathcal{C}\end{bmatrix}$ is generically full column rank if and only if there exists a disjoint union of loops and $\mathcal{C}$-topped paths that span all vertices of $\mathcal{E}$ in the graph of the structured pair.
\label{th:graph}
\end{theorem}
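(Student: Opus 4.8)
The plan is to reduce the graph condition to the classical combinatorial characterisation of the \emph{generic} (structural) rank of a structured matrix by bipartite matchings, and then to recast that matching in the language of the digraph of $(\mathcal{E},\mathcal{C})$. Write $M := \begin{bmatrix}\mathcal{E}\\\mathcal{C}\end{bmatrix} \in \mathbb{R}^{(n+m)\times n}$. The first, routine, step is the observation that $M$ has generic full column rank if and only if some $n\times n$ submatrix obtained by selecting $n$ of its $n+m$ rows has a determinant that does not vanish identically as a polynomial in the free parameters (if every such minor vanishes identically, then $M$ is rank-deficient for every parameter value; conversely one nonvanishing minor already certifies generic full rank).

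The second, and central, step is to characterise when an $n\times n$ structured submatrix has a generically nonzero determinant. Treating the free entries as algebraically independent indeterminates, expand by the Leibniz formula: a column permutation $\sigma$ contributes $\pm\prod_{j=1}^{n} a_{\sigma(j),j}$, which is nonzero exactly when every selected entry $(\sigma(j),j)$ is free, i.e. when $\sigma$ is a perfect matching of the bipartite row--column graph of the submatrix. The key point is that two distinct admissible $\sigma$ give two distinct square-free monomials — the set of variables $\{a_{\sigma(j),j}\}_{j}$ encodes the graph of $\sigma$ — so no cancellation can occur and the determinant is generically nonzero iff at least one perfect matching exists. Combining with the first step, $M$ has generic full column rank iff there is a set $S$ of $n$ rows and a bijection $\sigma:\{1,\dots,n\}\to S$ with $(\sigma(j),j)$ free for all $j$. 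This non-cancellation argument — essentially Edmonds' structural-rank theorem / the Dulmage--Mendelsohn theory — is the step I expect to need the most care, although it is classical.

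The third step translates an admissible $\sigma$ into the digraph of $(\mathcal{E},\mathcal{C})$, whose vertices are $v_1,\dots,v_n$ (indexing simultaneously the rows and columns of $\mathcal{E}$) together with the sinks $v_{n+1},\dots,v_{n+m}$ (indexing the rows of $\mathcal{C}$), and whose edges $v_j\to v_i$ record the free entries with $j\le n$. The map $\sigma$ selects exactly one out-edge from each $v_j$, $j\le n$, and the selected edges have pairwise distinct heads; hence in this sub-digraph every $v_j$ with $j\le n$ has out-degree $1$ and every vertex has in-degree at most $1$. An edge set with these properties decomposes uniquely into vertex-disjoint simple paths and simple cycles. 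A cycle cannot pass through a sink, so every cycle lies inside $\{v_1,\dots,v_n\}$ — these are the loops. A path must end at a vertex with no selected out-edge, and since every $v_j$, $j\le n$, has one, each nontrivial path terminates at a sink $v_i$, $i>n$ — these are precisely the $\mathcal{C}$-topped paths. As the selected edges cover all tails $v_1,\dots,v_n$, the resulting loops and $\mathcal{C}$-topped paths are disjoint and span every vertex of $\mathcal{E}$.

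Finally, for the converse I would run this correspondence backwards: given a disjoint union of loops and $\mathcal{C}$-topped paths spanning $v_1,\dots,v_n$, take its edge set; each $v_j$ with $j\le n$ lies on exactly one of these structures and so has exactly one outgoing edge in it, and vertex-disjointness and simplicity force these $n$ edges to have distinct heads. This is a bijection from $\{1,\dots,n\}$ onto a set of $n$ rows with all $(\sigma(j),j)$ free, hence a generically nonzero $n\times n$ minor of $M$, hence generic full column rank. The two directions together yield the stated equivalence.
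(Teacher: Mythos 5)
The paper does not actually prove this statement: it is imported verbatim as \cite[Theorem 1]{EC\_full}, so there is no in-paper proof to compare against. Judged on its own, your argument is correct and is essentially the standard proof of this structural-rank result. The three ingredients are all sound: (i) generic full column rank of $M$ is equivalent to some $n\times n$ minor being a not-identically-zero polynomial; (ii) that minor is generically nonzero iff its rows and columns admit a transversal of free entries, because distinct permutations select distinct sets of independent indeterminates and hence contribute distinct square-free monomials with coefficients $\pm 1$ that cannot cancel (this is the term-rank $=$ generic-rank identity, and you are right to flag it as the step needing care); and (iii) a transversal is exactly a choice of one out-edge per vertex $v_1,\dots,v_n$ with pairwise distinct heads, which decomposes uniquely into vertex-disjoint cycles inside $\{v_1,\dots,v_n\}$ and simple paths terminating at the out-degree-zero sink vertices of $\mathcal{C}$, i.e.\ a disjoint union of loops and $\mathcal{C}$-topped paths spanning the vertices of $\mathcal{E}$; the reverse translation is equally clean since the terminal vertex of a $\mathcal{C}$-topped path is the only path vertex without an out-edge and it does not belong to $\{v_1,\dots,v_n\}$. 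The only implicit hypothesis worth stating explicitly is that all free entries are algebraically independent (no repeated parameters), which is exactly the paper's definition of a structured matrix, so nothing is missing.
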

Applying the previous theorem to the example in Figure \ref{fig:toygraph}, we ascertain that $\begin{bmatrix}\mathcal{E} \\ \mathcal{C} \end{bmatrix}$ is generically full column rank because its graph can be spanned by the simple path \mbox{$v_2\rightarrow v_1\rightarrow v_3\rightarrow v_4$}. However, removing, for example, edge $a_{31}$ renders the first column of $\begin{bmatrix} {\mathcal{E}} \\ {\mathcal{C}}\end{bmatrix}$ a zero column and the graph $\mathcal{E}$ cannot be spanned anymore by disjoint loops and $\mathcal{C}$-topped paths. \par
\subsection{Topological Estimability}
In what follows, we drop the subscript $k$ from the matrices in (\ref{eq:long}) to simplify the notation. 
First, we show that it is sufficient to certify the full rank of $\left[\begin{smallmatrix}\bm{E}_4\\\bm{C}_2 \end{smallmatrix} \right]$ to satisfy the estimability condition.
\begin{lemma}
For sufficiently small discretization steps $h$, full column rank of matrix $\begin{bmatrix}\bm{E}_{4}\\ \bm{C}_2 \end{bmatrix}$ implies full column rank of the matrix $\begin{bmatrix} \bm{E} \\ \bm{C}\end{bmatrix}$.
\end{lemma}
\begin{proof}
See Appendix~\ref{proof1}.
\end{proof}
Lemma 2 suggests that the placement of PMUs should ensure full rank of $\left[\begin{smallmatrix} \bm{E}_4 \\ \bm{C}_2 \end{smallmatrix}\right]$. This requirement is addressed in the next subsection. 
We now regard the pair $(\bm{E}_4, \bm{C}_2)$ as structured matrices, by considering their non-zero entries (which depend on the power system parameters and operating point) as free parameters. As in the previous subsection, we can associate a graph with the pair $(\bm{E}_4, \bm{C}_2)$, with $2(n_a + n_m)$ vertices and a structure similar to the power system topology. Henceforth, to make the distinction clearer, when we refer to the graph of the structured pair $(\bm{E}_4, \bm{C}_2)$, we use the terminology of vertices and edges, whereas for the topology of the corresponding power system, we use nodes and branches.  \par
\begin{theorem}
The power system (\ref{eq:dae_discrete_diff})--(\ref{eq:dae_discrete_meas}) is topologically estimable by (\ref{covariance_ekf}) if there exist disjoint paths along transmission lines from all nodes with an unknown injector to a node with an assigned PMU device. 
\end{theorem}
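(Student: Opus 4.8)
The plan is to reduce the statement to the purely combinatorial criterion of Theorem~\ref{th:graph} and then to build, from the hypothesized node-disjoint paths, an explicit vertex-disjoint family of loops and $\mathcal{C}$-topped paths that spans every vertex of $\bm{E}_4$. By Lemma~1, topological estimability of (\ref{eq:dae_discrete_diff})--(\ref{eq:dae_discrete_meas}) by the recursion (\ref{covariance_ekf}) is equivalent to generic full column rank of $\left[\begin{smallmatrix}\bm{E}_4\\ \bm{C}_2\end{smallmatrix}\right]$, and by Theorem~\ref{th:graph}, applied to the structured pair $(\bm{E}_4,\bm{C}_2)$ after the zero-row augmentation (which does not change the generic rank), this rank property holds precisely when such a spanning family exists in the associated graph. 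So the whole argument comes down to constructing that family.

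First I would record the structural facts already established in the preceding subsections: each power-system node contributes a real and an imaginary vertex; the row of $\bm{E}_4$ attached to a node with an unknown injector is a zero row, so its two vertices have no incoming edges and can only be the tail of a $\mathcal{C}$-topped path; between two nodes joined by a transmission line (so $G_{ij}\neq 0$ and $B_{ij}\neq 0$) every orientation of the real--real and imaginary--imaginary edges is present; for a node carrying a known current balance the $2\times 2$ diagonal block of $\bm{E}_4$ is generically nonsingular, thanks to shunts and/or line conductances, so its two vertices are spanned by loops (two self-loops, or a length-two loop) inside that block; and the measurement-to-node assignment is arranged so that a PMU at a node $t$ makes a measurement vertex reachable from a vertex of $t$, in fact with two independent scalar rows, one reachable from the real and one from the imaginary vertex of $t$.

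Then the construction: let $\mathcal{U}$ be the set of nodes with unknown injectors and $\{P_u\}_{u\in\mathcal{U}}$ the given paths along transmission lines, with $P_u$ running from $u$ to a PMU node $t_u$. Since the $P_u$ are node-disjoint, all their internal and terminal nodes carry known current balances and the $t_u$ are distinct. For each $u$ I would lift $P_u$ to two $\mathcal{C}$-topped paths: one threading the real vertices of the nodes of $P_u$ in order, the other threading their imaginary vertices, each prolonged by a single edge into one of the two measurement rows of the PMU at $t_u$. Node-disjointness of the $P_u$ together with the real/imaginary split makes all of these paths mutually vertex-disjoint, and the zero row at each $u$ is consistent with $u$ appearing only as a tail (the degenerate case where $u$ itself holds the PMU just collapses each path to a single edge). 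Every node lying on no $P_u$ has a known current balance, so I cover its two vertices by the loops of its own diagonal block; these are disjoint from all the paths. The resulting family of $\mathcal{C}$-topped paths and loops covers every vertex of $\bm{E}_4$ exactly once, so Theorem~\ref{th:graph} certifies generic full column rank of $\left[\begin{smallmatrix}\bm{E}_4\\ \bm{C}_2\end{smallmatrix}\right]$, and Lemma~1 then yields the claim.

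I expect the main difficulty to be bookkeeping rather than anything deep: one must check that the two chains extracted from a single $P_u$ and the chains from different paths are pairwise vertex-disjoint (ensured by the real/imaginary split together with node-disjointness of the $P_u$ and distinctness of the $t_u$), that every edge used along a chain and into a measurement row corresponds to a structurally nonzero Jacobian entry (this is exactly where the transmission-line assumptions $G_{ij}\neq 0$, $B_{ij}\neq 0$ and the shunt terms enter), and that the ``generic'' qualifier is legitimate even though the rectangular representation repeats the parameters $G_{ii}$ and $B_{ij}$ within a row --- a point for which the shunt terms guaranteeing independence of the entries within each row of $\bm{E}_4$ are precisely what is invoked. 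A careful write-up should also treat the current-flow and current-injection measurement cases explicitly, since there the two scalar rows of the PMU are reachable from more than one vertex and one must commit to which chain terminates in which row.
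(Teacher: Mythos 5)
Your proposal is correct and follows essentially the same route as the paper's proof: reduce to Theorem~\ref{th:graph}, lift each node-disjoint path to a pair of vertex-disjoint $\bm{C}_2$-topped chains through the real and imaginary vertices (terminating in the two measurement rows of the assigned PMU), and cover all remaining vertices with trivial loops from the nonzero diagonal blocks of $\bm{E}_4$. The only detail the paper spells out that you gloss over is the case split on whether $G_{ij}$ or $B_{ij}$ is nonzero (which determines whether the two chains run real-to-real/imaginary-to-imaginary or cross over), but you flag exactly this point yourself, so the argument is complete in substance.
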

\begin{proof}
See Appendix~\ref{proof2}.
\end{proof}
Theorem 3 states that the number of available PMUs has to be at least as high as the number of unknown injectors. Hence, the more dynamic models are available, the fewer measurements are necessary for the uniqueness of the solution. The conclusion underscores how the static network observability requirements is relaxed. It is well known that SSE requires at least $2n-1$ ($n$ being the number of nodes) measurements to achieve the unique solution \cite{abur}.\par 
The above analysis provides a method to analyze estimability of the DSE. However, it does not suggest how to optimally place PMUs to achieve a specific performance metric. This can be achieved by formulating a combinatorial optimization problem that may be amenable to submodular optimization techniques -- a topic for future work.

\section{Numerical Results}
\label{Results}
\subsection{Simulation Setup}
To validate the performance of the proposed method, we perform numerical simulations on a power system comprising synchronous generators (SG), photovoltaic (PV) generators, loads, and a network interconnecting them. Other generator and load types can be integrated into the framework in a straightforward manner. \par 
The IEEE 39-bus test system is used, whose parameters are taken from \cite{hiskens}. The benchmark model is extended to include dynamic PV generator models from \cite{DER_A}, connected to nodes $20$, $21$, and $23$, each producing 100 MW at steady state. The ground-truth values are obtained by a simulation that employs the sixth order, subtransient synchronous generator model \cite{machkowski}, including the IEEE DC1A automatic voltage regulator \cite{machkowski} and the simple TGOV1 turbine model \cite{governor}. The subtransient reactances are set to 90\% of their transient counterparts, whereas the subtransient time constants are 100 times smaller than the transient counterparts \cite{kundur}. All loads are modeled as a composition of a negative power injection and an impedance. The simulation model is determined and comprises $110$ differential and $78$ algebraic states and, in total, $188$ equations. The solution is obtained by the collocation discretization method with third-degree Radau collocation points. \par
\subsection{Estimation Setup}
The goal is to perform the state estimation on the highlighted area depicted in Figure~\ref{fig:39bus}. From the perspective of the operator of this area, the rest of the power system is out of interest. Hence, interface node 16 can be considered as a node with an unknown dynamic model by the proposed estimation scheme, and only the topology and network parameters of the highlighted area are needed for the DSE. The area of interest can be chosen arbitrarily based on the responsibility of the system operator, access to the model and real-time measurements, and computational limitations. For the estimation, a simplified, fourth-order Anderson-Fouad's transient model of the synchronous generator is used \cite{milano2010power} to mimic the fact that mathematical models represent an approximate description of the true behavior of a physical system. We assume that the exciter, the turbine model, and the generators' power and voltage set points are known. However, we consider all PV and load models unknown to the estimator. Obtaining accurate models of many distributed PVs is challenging in practice; the same holds for time-varying residential and commercial loads in distribution grids, and modeling them as static consumers with their power inferred from historical data may be unreliable, especially during disturbances and transient events \cite{milano2010power}. If dynamic models or reliable real-time consumption data are available, they can be incorporated into the framework in a straightforward manner. \par
Therefore, all shaded elements in Figure \ref{fig:39bus}, connected to nodes $16, 20, 21, 23$, and $24$, are unknown to the estimator, which makes the estimation model incomplete. We assume no prior knowledge and make no assumptions regarding the injections of these nodes. The estimation model comprises $36$ differential states, $22$ algebraic states, and only $48$ equations. PMU measurements are assumed to arrive simultaneously every \SI{20}{ms} \cite{PMU_stand}; their values are obtained by adding Gaussian measurement noise with a standard deviation in the range 0.001--0.003 p.u. to the real and imaginary voltage and current ground-truth values, which is according to the IEEE standard \cite{PMU_stand}.\par
All simulations are implemented in Python on an Intel i7-10510U CPU @ 1.80GHz with 16GB RAM. Cholesky factorization is used to factorize the state estimation covariance matrix in (\ref{covariance_ekf}), which can be ill-conditioned. All Jacobians are computed by the algorithmic differentiation using CasADi's symbolic framework \cite{Andersson2019}.  
\begin{figure}
    \centering
    \includegraphics[width=\columnwidth]{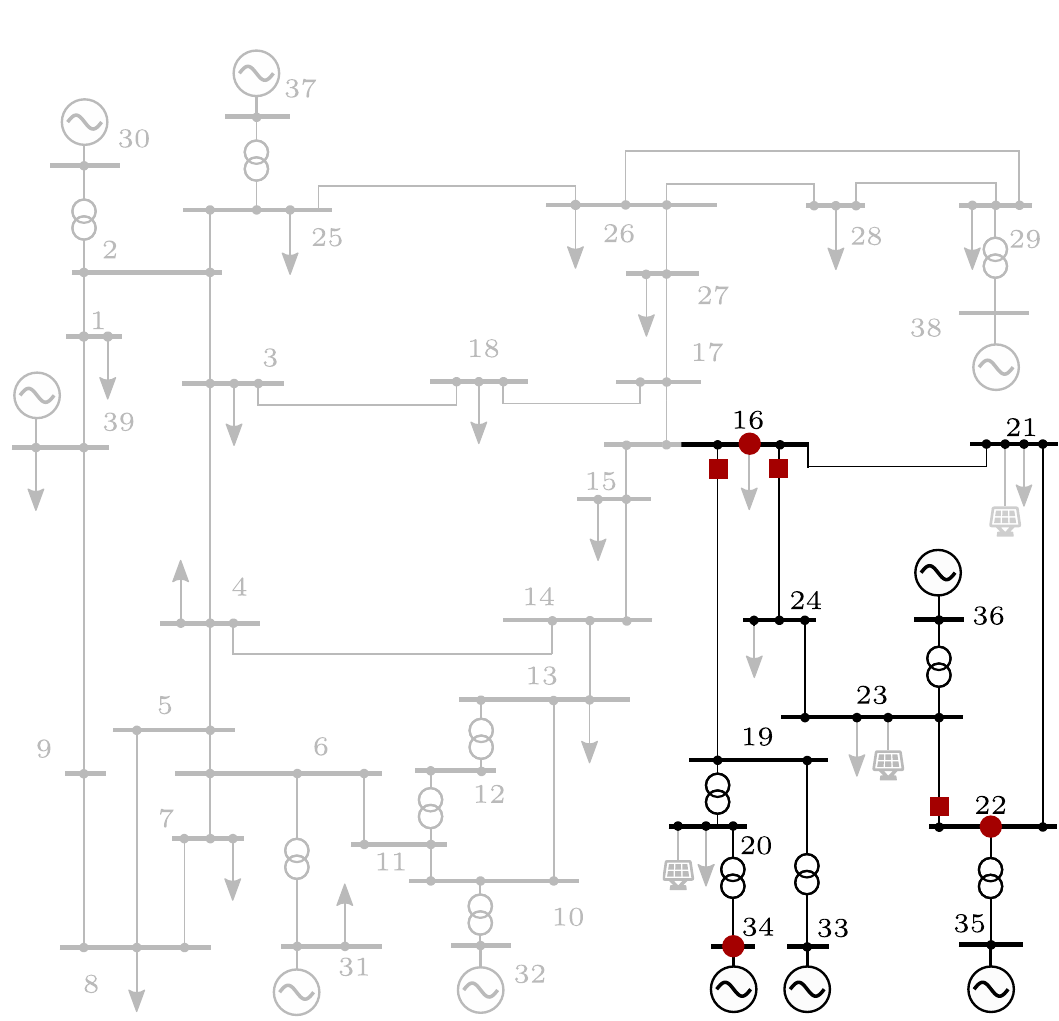}
    \caption{Single line diagram of the IEEE 39-bus test system. The estimation is performed on the highlighted area; shaded elements are unknown to the estimator. Voltage phasor measurements are denoted by the red circles, and current phasor measurements are denoted by the red squares.}
    \label{fig:39bus}
\end{figure}
\subsection{Case Study 1: Network Short Circuit}
Short circuits are one of the most severe disturbances with respect to their impact on power system stability, but also with respect to the stability of the implemented numerical methods for simulation. Here, we simulate a three-phase short circuit at $t=\SI{8}{s}$ in the middle between two PMU sampling events on the line between nodes $5$ and $8$, which is cleared after \SI{60}{ms} on both ends. Note that the operator of the shaded area from Figure~\ref{fig:39bus} (and hence the proposed estimator) has no direct awareness of this disturbance. The estimation is initialized with a voltage flat start; the differential states are initialized with a uniform random error from the ground truth with bounds $\pm 10\%$, except for the rotor speeds, where the bound is $\pm 0.1\%$. PMUs are placed to satisfy the conditions of Theorem 3. PMU voltage measurements are placed at nodes $19$, $23$, and $34$; PMU branch current measurements are located at lines $16-19$, $16-24$, and $22-23$. \par
To certify topological estimability, we assign the PMU at line $16-24$ to node $24$, and the PMU at line $22-23$ to node $22$ and observe that topological estimability (Theorem 3) for the proposed method is satisfied as all unknown injectors can be connected to a node with an assigned PMU device by disjoint paths: injector $16$ to node $16$, injector $20$ to node $34$, injector $21$ to node $22$, injector $23$ to node $23$, and finally injector $24$ to node $24$ (see Figure~\ref{fig:39bus}). By contrast, the power system is statically not observable, or, more precisely, ignoring the dynamic equations, the voltages are not uniquely estimable from the given PMU measurements. \par
\subsubsection{Standard Measurement Noise Magnitude}
The results for the rotor angle, rotor speed, turbine power, and exciter voltage for the generators at nodes $33$, $34$, and $36$ for the standard deviation of measurement noise of 0.001 p.u. are shown in Figure~\ref{fig:results_short}. All rotor angles are given with respect to the rotor angle at bus $36$. 

\begin{figure*}
        \includegraphics[]{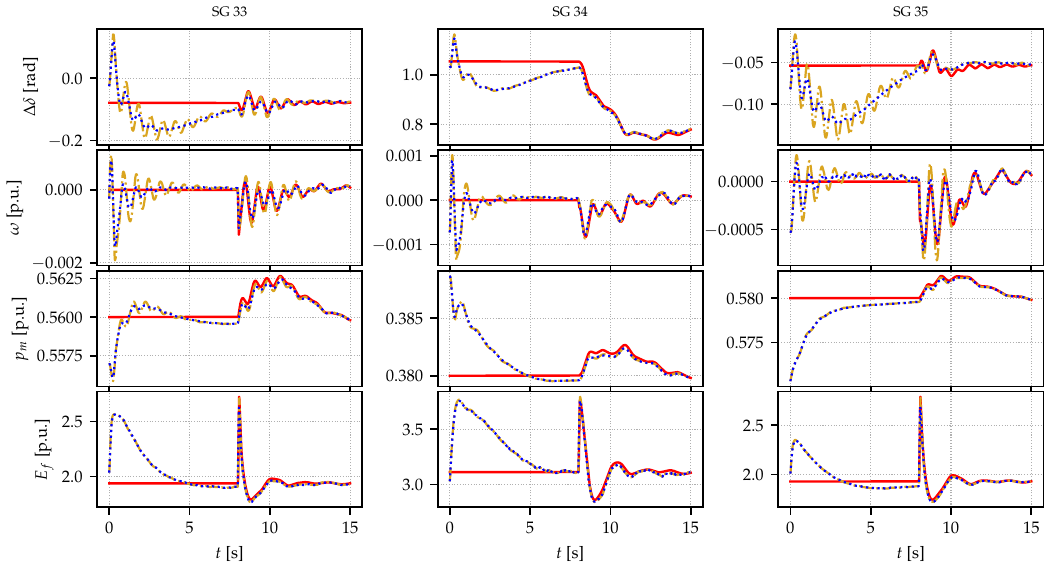}
        \caption{Estimation of dynamic states of SGs during initialization and during and after the short-circuit in the network. The true state is denoted by \protect\redline \hspace{0.03cm}; the proposed IEKF with the trapezoidal rule by \protect\goldline\hspace{0.03cm}, and the backward Euler by \protect\blueline\hspace{0.03cm}. From top to bottom: rotor angle, rotor speed, turbine power, exciter voltage; from left to right: SG 33, SG 34, and SG 35.}
        \label{fig:results_short}
\end{figure*}
It can be seen that the proposed method showcases excellent tracking properties both during initialization and during disturbances. Both the implicit Euler and the trapezoidal rule achieve satisfactory performance. The implicit Euler appears to have a more damped response to wrong initial conditions and better tracking during transients. The explicit Euler is not shown in the figure because it suffers, as expected for stiff systems, from oscillatory behavior.\par
In what follows, we only consider the backward Euler discretization scheme, as its tracking performance during transients is deemed superior to the forward Euler and the trapezoidal scheme. Figure~\ref{fig:results_short_voltage} presents the true and estimated voltage magnitudes of all nodes in the area of interest. The estimated states closely track the true values, both during steady state and transients. The average mean square error of the estimation of all nodes after the initialization phase ($t \geq \SI{7.5}{s}$) is $8.25 \cdot 10^{-7}$. This value is smaller than the variance of the PMU measurement noise despite the fact that the network is statically unobservable.\par
The maximal number of iterations to reach the prespecified threshold of $\epsilon = 10^{-4}$ and the average and worst-case computing times for the iterated schemes are shown in Table~\ref{tab:computational}. For both discretization schemes, the average and maximal computational times are below the typical PMU sampling rate.
\begin{table}[]
    \caption{Computational performance of the proposed DSE for two discretization methods at each PMU scan.}
    \centering
    \begin{tabular}{c|c c c c c|}
    \toprule
    Discr. Scheme& Max. Iter.&Aver. Iter.&Max. Time&Aver. Time \\
    \midrule
    Backward Euler &  3 & 2.13&\SI{5.49}{ms}&\SI{2.38}{ms} \\
    Trapezoidal  &3 & 2.10&\SI{5.98}{ms}& \SI{2.49}{ms}\\
    
    \bottomrule
    \end{tabular}
    \label{tab:computational}
\end{table}
\begin{figure}
    \includegraphics[]{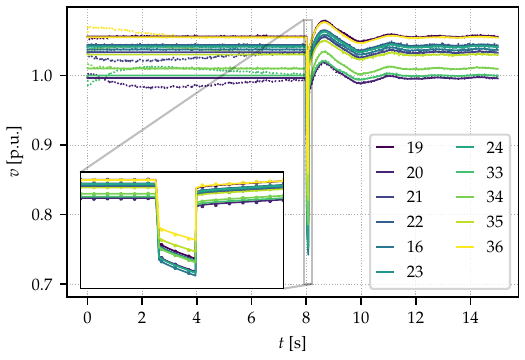}
    \caption{Estimated and true voltage magnitudes during the initialization and during and after the short circuit. Full lines denote the true voltages; markers denote the estimated ones.}
    \label{fig:results_short_voltage}
\end{figure}
\par 
\subsubsection{Elevated Measurement Noise Magnitude}
In accordance with findings from \cite{Wang2018}, the actual PMU measurement noise may deviate from Gaussian distribution and exhibit thick tales. To evaluate the robustness of the proposed method against such scenarios, we conduct additional simulations in which PMU data are contaminated with Laplacian noise featuring a standard deviation of 0.003 p.u. This value is three times larger than the standard deviation in the previous experiments. Due to space limitations, only states of synchronous generators 33 and 36 are shown in Figure~\ref{fig:results_short_laplace}.
\begin{figure}
    \includegraphics[]{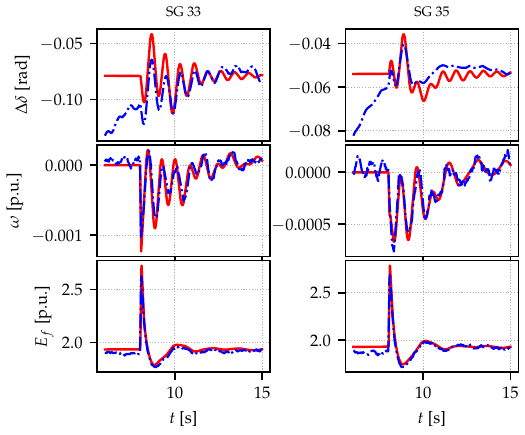}
    \caption{Results of estimated and true voltage magnitudes during and after the short circuit for elevated PMU noise levels following Laplace distribution. Red lines denote the ground truth values; blue lines denote the estimated ones.}
    \label{fig:results_short_laplace}
\end{figure}
Notably, the figure reveals that, even in the presence of significantly elevated noise levels, the estimator demonstrates resilience and remains capable of accurately tracking the true states of the system. \par
\subsubsection{Impact of PMU placement}
Next, we perform several different simulations for different PMU placements, all compliant with the condition stipulated in Theorem 3. Table \ref{tab:pmu} provides a comprehensive overview of the performance of different measurement constellations. The mean square error for selected differential and algebraic states remains small across all tested configurations.
\begin{table*}[t]
    \caption{Impact of the PMU placement on the accuracy of DSE.}
    \centering
    \begin{tabular}{c c| c c c c c}
    \toprule
    \multicolumn{2}{c|}{PMU Measurements} &\multicolumn{5}{c}{Mean Square Error} \\
    Voltage&Current &$\delta_{\rm{}}$&$\omega_{\rm{}}$&$E_{f{\rm{}}}$&$p_{m{\rm{}}}$&$v_{\rm{}}$ \\
    \midrule
     19, 23, 34 &  16-19, 16-24, 22-23 & $6.27 \cdot 10^{-5}$&$3.72 \cdot 10^{-9}$&$5.18 \cdot 10^{-4}$&$4.62 \cdot 10^{-8}$&$9.32 \cdot 10^{-7}$\\
    19, 23, 24 &  16-19, 21-22, 22-23 & $7.34 \cdot 10^{-5}$&$3.45 \cdot 10^{-9}$&$7.52 \cdot 10^{-4}$&$4.47 \cdot 10^{-8}$&$1.46 \cdot 10^{-6}$\\
    20, 35 &  16-19, 16-24, 23-24, 35-22 &$1.46 \cdot 10^{-5}$&$4.04 \cdot 10^{-9}$&$4.94 \cdot 10^{-4}$&$1.53 \cdot 10^{-8}$&$1.90 \cdot 10^{-6}$\\
    20, 21, 35 &  34-20, 16-24, 23-24, 35-22 & $9.26 \cdot 10^{-6}$&$4.00 \cdot 10^{-9}$&$8.04 \cdot 10^{-4}$&$1.28 \cdot 10^{-8}$&$2.40 \cdot 10^{-6}$\\
    20, 21, 24, 33 &  34-20, 16-24, 21-22 & $5.79 \cdot 10^{-6}$&$2.71 \cdot 10^{-9}$&$7.58 \cdot 10^{-4}$&$1.76 \cdot 10^{-8}$&$1.14 \cdot 10^{-6}$\\
    \bottomrule
    \end{tabular}
    \label{tab:pmu}
\end{table*}
While the achieved results showcase the efficacy of SE with various PMU placements meeting the criteria of Theorem 3, it is essential to note that the selection of the locations of PMUs is not unique. Multiple combinations satisfy the conditions, resulting in different accuracy of state estimation, as detailed in Table \ref{tab:pmu}. In light of these findings, our future research will focus on formulating an optimization problem aimed at guiding the optimal placement of PMUs.\par

\subsection{Case Study 2: Load Disturbance}
To compare the proposed method with the existing methods in the literature, we simulate a step change of the load consumption at node $21$, mimicking a sudden disconnection of a large consumer. For comparison, we employ a decoupled approach from the literature, wherein, in the first stage, a static estimator calculates the voltages in the grid.  As the network is not observable with the given PMU configuration, additional power pseudo-measurements are used for all the loads and for the interface node 16. These values are kept unknown to the proposed IEKF. They can be obtained, for example, from historical data or from slow scan rate conventional measurements and correspond to the actual operating point before the disturbance. With more uncertainty regarding those values compared to the PMU measurements, the static state estimator is implemented with a robust least absolute value procedure (LAV). In the second stage, to estimate the differential states of the generators, the EKF with and the $H_{\infty}$ EKF (HEKF) \cite{Hinf_compare} are implemented. The terminal voltages are used as inputs, and the terminal currents as measurements. 

The results in Figure~\ref{fig:compare} highlight that all three methods accurately track the system states before the disturbance at $t=\SI{8}{s}$. However, after the disturbance, the robust comparison methods exhibit a steady state error due to unexpected load change. The proposed estimator achieves accurate state tracking before and after the disturbance.

These results confirm that modeling loads and renewables with their forecasted infeed can lead to biased estimates during sudden disturbances, even if robust estimation methods are used. \par 
\begin{figure}
    \includegraphics[]{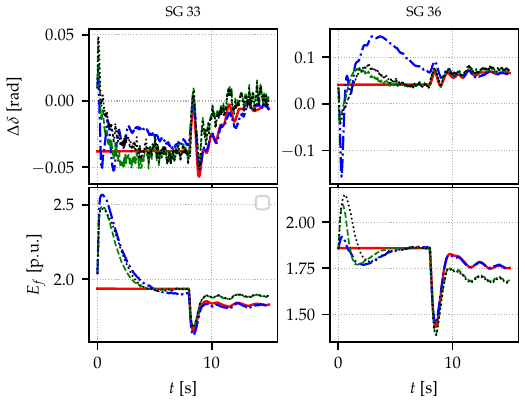}
    \caption{Comparison of the proposed IEKF \protect\blueline\hspace{0.03cm} with the LAV+EKF \protect\blackline\hspace{0.03cm} and LAV+HEKF \protect\greenline\hspace{0.03cm}. The true state is represented by \protect\redline.  }
    \label{fig:compare}
\end{figure}

\section{Conclusions and future work}
\label{Conclusion}
A novel centralized DSE has been proposed to address the challenges arising from partially known power system models with no prior knowledge or assumptions regarding some of the injectors. In contrast to the standard IEKF, which only deals with linear ODE models, the proposed method treats incomplete nonlinear differential-algebraic-equation models. The estimation of all system states was demonstrated to be very accurate when the estimability condition was met. The key insight lies in the requirement that all unknown injectors be connectable to a unique PMU device through mutually disjoint paths. The numerical results showcase the importance of the employed discretization scheme for DSE with an incomplete nonlinear DAE model, and the moderate computing times of the proposed scheme signify real-time feasibility. The proposed method outperforms the robust two-stage decoupled methods from the literature, particularly during sudden disturbances. \par
Future work will analyze the detectability/observability of the employed DAE power system models and subsequently design an optimization algorithm for the optimal placement of PMUs. Next, even though the method can disregard injectors with unknown models, bad PMU data are still a major concern. Therefore, future work will also address bad data detection and identification.

\appendix 
\subsection{Kalman Filtering for Linear Discrete-Time Descriptor Systems}
\label{subsec_kf}
For improved readability, this subsection summarizes the most related results from \cite{ishihara}, where a recursive optimal filter for a general class of linear discrete-time, stochastic descriptor systems of the form
\begin{align}
    \label{eq:daed}
    \bm{\bar{E}} \bm{\bar{x}}_{k} &= \bm{\bar{A}}\bm{\bar{x}}_{k-1} + \bm{\bar{\xi}}_k,\\
    \bm{\bar{z}}_{k} &= \bm{\bar{C}} \bm{\bar{x}}_{k} + \bm{\bar{\nu}}_k,
    \label{eq:daea}
\end{align}
is presented. Matrices $\bm{\bar{E}}$ and $\bm{\bar{A}}$ need not be invertible or even square: the method can handle under/over-constrained systems. It is assumed that $\bm{\bar{\xi}}_k$ and $\bm{\bar{\nu}}_k$ are zero mean white Gaussian noise sequences with
\begin{equation}
    \mathbb{{E}}\left[
        \begin{bmatrix}
            \bm{\bar{\xi}}_k\\ \bm{\bar{\nu}}_k
        \end{bmatrix}\begin{bmatrix}
            \bm{\bar{\xi}}_j\\ \bm{\bar{\nu}}_j
        \end{bmatrix}^T \right]=\begin{bmatrix}
            \bm{\bar{Q}} & \bm{0}\\\bm{0} & \bm{\bar{R}}
        \end{bmatrix}\delta(k-j),
\end{equation}
where $\delta(l) = 1$ if $l=0$ and $0$ otherwise. $\bm{\bar{Q}}$ and $\bm{\bar{R}}$ are positive definite covariance matrices. For the more general case with positive semi-definite covariance matrices, the interested reader is referred to \cite{kalman_descriptor}. The key idea is to process both the time-evolution equation (\ref{eq:daed}) and the measurement equation (\ref{eq:daea}) simultaneously in a batch-mode regression. \par
The optimal recursive state estimation at time step $k$ can be formulated as the following optimization problem:
\begin{equation}
    \min_{\mathfrak{\bar{X}}} \quad (\mathfrak{\bar{A}}\mathfrak{\bar{X}}-\mathfrak{\bar{B}})^T\mathfrak{\bar{R}}(\mathfrak{\bar{A}}\mathfrak{\bar{X}}-\mathfrak{\bar{B}}) \label{opt_pdf},
\end{equation}
where\\
$\mathfrak{\bar{A}}:=\begin{bmatrix}\bm{\bar{E}} & \bm{\bar{A}}\\ \bm{\bar{C}} &\bm{0} \\ \bm{0}& \bm{I} \end{bmatrix}$, $\mathfrak{\bar{B}}:=\begin{bmatrix}\bm{0}\\ \bm{\bar{z}}_k \\ - \bm{\hat{\bar{x}}}_{k-1|k-1} \end{bmatrix}$, \\
$\mathfrak{\bar{R}} := \begin{bmatrix} \bm{\bar{Q}}^{-1} & \bm{0} & \bm{0} \\ \bm{0} & \bm{\bar{R}}^{-1} & \bm{0} \\ \bm{0} & \bm{0} & \bm{\bar{P}}^{-1}_{k-1|k-1} \end{bmatrix}$, $\mathfrak{\bar{X}} : = \begin{bmatrix} \bm{\bar{x}}_k \\ -\bm{\bar{x}}_{k-1}\end{bmatrix}.$\\
If the matrix $\begin{bmatrix} \bm{\bar{E}} \\ \bm{\bar{C}}\end{bmatrix}$ has full column rank, the optimal solution is given by \cite{kalman_descriptor}:
\begin{equation}
\hat{\mathfrak{\bar{X}}}=(\mathfrak{\bar{A}}^T\mathfrak{\bar{R}}\mathfrak{\bar{A}})^{-1}\mathfrak{\bar{A}}^T\mathfrak{\bar{R}} \mathfrak{\bar{B}},
\end{equation}
or
\begin{align}
    \begin{bmatrix} \bm{\hat{\bar{x}}}_{k|k} \\ \bm{\hat{\bar{x}}}_{k-1|k} \end{bmatrix}=  \begin{bmatrix} \bm{\bar{P}}_{k|k} & \bm{\bar{P}}_{k,k-1|k}\\ \bm{\bar{P}}^T_{k,k-1|k} & \bm{\bar{P}}_{k-1|k} \end{bmatrix}
    \begin{bmatrix}\bm{\bar{C}}^T \bm{\bar{R}}^{-1}\bm{\bar{z}}_k \\ -\bm{\bar{P}}_{k-1|k-1}^{-1} \bm{\hat{\bar{x}}}_{k-1|k-1}\end{bmatrix},
    \label{kf_big_matrix}
\end{align}
where
\begin{equation}
\begin{aligned}
    &\begin{bmatrix} \bm{\bar{P}}_{k|k} & \bm{\bar{P}}_{k,k-1|k}\\ \bm{\bar{P}}_{k,k-1|k}^T & \bm{\bar{P}}_{k-1|k} \end{bmatrix} := \\
    &\begin{bmatrix} \bm{\bar{E}}^T\bm{\bar{Q}}^{-1}\bm{\bar{E}} + \bm{\bar{C}}^T\bm{\bar{R}}^{-1}\bm{\bar{C}} & \bm{\bar{E}}^T\bm{\bar{Q}}^{-1} \bm{\bar{A}}\\ \bm{\bar{A}}^T\bm{\bar{Q}}^{-1}\bm{\bar{E}} & \bm{\bar{A}}^T \bm{\bar{Q}}^{-1}\bm{\bar{A}} + \bm{\bar{P}}_{k-1|k-1}^{-1} \end{bmatrix}^{-1}. \label{kf_inverse}
        \end{aligned}
\end{equation}
Notice that only $\bm{\bar{P}}_{k|k}$ and $\bm{\bar{P}}_{k,k-1|k}$ are needed to calculate the most recent state estimate. Applying the matrix inversion lemma and some algebraic transformations to (\ref{kf_inverse}) gives
\begin{align}
    \bm{\bar{P}}_{k|k}^{-1} &= \bm{\bar{E}}^T(\bm{\bar{Q}} + \bm{\bar{A}}\bm{\bar{P}}_{k-1|k-1}\bm{A}^T)^{-1}\bm{\bar{E}} + \bm{\bar{C}}^T\bm{\bar{R}}^{-1} \bm{\bar{C}}, \label{cov_kalman} \nonumber\\
    \bm{\bar{P}}_{k,k-1|k} &= - \bm{\bar{P}}_{k|k}\bm{\bar{E}}^T(\bm{\bar{Q}} + \bm{\bar{A}}\bm{\bar{P}}_{k-1|k-1}\bm{\bar{A}}^T)^{-1}\bm{\bar{A}}\bm{\bar{P}}_{k-1|k-1}.
\end{align}
For details, the interested reader is referred to \cite{mhe_kalman}. Substituting these expressions into (\ref{kf_big_matrix}) yields
\begin{align}
    \hat{\bm{\bar{x}}}_{k|k}&= \bm{\bar{P}}_{k|k}\bm{\bar{E}}^T(\bm{\bar{Q}} + \bm{\bar{A}}\bm{\bar{P}}_{k-1|k-1}\bm{\bar{A}}^T)^{-1}\bm{\bar{A}}\bm{\hat{\bar{x}}}_{k-1|k-1} \nonumber \\
    &~~~+ \bm{\bar{P}}_{k|k}\bm{\bar{C}}^T\bm{\bar{R}}^{-1}\bm{\bar{z}}_k. \label{state_kalman}
\end{align}
For standard state space systems $(\bm{E} = \bm{I})$, the recursion \mbox{(\ref{cov_kalman})--(\ref{state_kalman})} reduces to the classical Kalman filter in the one-step form (see \cite[p.~131]{simon2006optimal}).
\subsection{Proof of Lemma 2}
\label{proof1}
\begin{proof}
First, we introduce the following notation:
\begin{equation*}
\Tilde{\bm{E}}_3 := \begin{bmatrix} \bm{E}_3 \\ \bm{0} \end{bmatrix} \in \mathbb{R}^{({n}_g + n_m) \times n_d}, ~ \Tilde{\bm{E}}_4 := \begin{bmatrix} \bm{E}_4 \\ \bm{C}_2 \end{bmatrix}\in \mathbb{R}^{({n}_g + n_m)\times n_a},
\end{equation*}
to rewrite
$
       \begin{bmatrix} \bm{E} \\ \bm{C}\end{bmatrix} =  \begin{bmatrix}
    h\bm{E}_1 - \bm{I} & h\bm{E}_2 \\ \Tilde{\bm{E}}_3 & \Tilde{\bm{E}}_4
    \end{bmatrix}. 
$
For sufficiently small $h$, the submatrix $\begin{bmatrix}h \bm{E}_1 - \bm{I}\end{bmatrix}$ is full rank. \par 
To complete the proof, assume for the sake of contradiction that the matrix $\begin{bmatrix} \bm{E} \\ \bm{C} \end{bmatrix}$ is not full column rank. Then there exists a vector $\begin{bmatrix} \bm{x}_1 \\ \bm{x}_2 \end{bmatrix}\neq \bm{0}$ of appropriate dimension such that 
\begin{equation}
    \begin{bmatrix}
    h\bm{E}_1 - \bm{I} & h\bm{E}_2 \\ \Tilde{\bm{E}}_3 & \Tilde{\bm{E}}_4
    \end{bmatrix}
    \begin{bmatrix}
    \bm{x}_1 \\ \bm{x}_2
    \end{bmatrix} = 
    \begin{bmatrix}
    \bm{0} \\ \bm{0}
    \end{bmatrix}.
    \label{rank1}
\end{equation}
Expressing $\bm{x}_1$ from the first equation of (\ref{rank1}) as
\begin{equation}
    \bm{x}_1 = h (\bm{I} - h{\bm{E}}_1)^{-1}\bm{E}_2 \bm{x}_2,
    \label{eq:x_1=0}
\end{equation} 
and substituting into the second gives
\begin{equation}
    \left(h \Tilde{\bm{E}}_3 (\bm{I} - h \bm{E}_1)^{-1} \bm{E}_2 + \Tilde{\bm{E}}_4 \right) \bm{x}_2 = \bm{0}.
    \label{rank3}
\end{equation}
Premultiplying (\ref{rank3}) by $\frac{1}{h} \left(\Tilde{\bm{E}}_4^T\Tilde{\bm{E}}_4\right)^{-1}\Tilde{\bm{E}}_4^T$ (the inverse exists because of full column rank of $\Tilde{\bm{E}}_4$) gives
\begin{equation}
    \left((\Tilde{\bm{E}}_4^T\Tilde{\bm{E}}_4)^{-1}\Tilde{\bm{E}}_4^T \Tilde{\bm{E}}_3 (\bm{I} - h\bm{E}_1)^{-1}\bm{E}_2 + \frac{1}{h}\bm{I} \right)\bm{x}_2=\bm{0}.
    \label{rank4}
\end{equation}
Because the Jacobians are bounded, and, for small $h$, $\begin{bmatrix}\bm{I} - h \bm{E}_1\end{bmatrix}^{-1}$ is also bounded, as $h \to 0$, the eigenvalues of $\frac{1}{h}\bm{I}$ are dominating, and the matrix from (\ref{rank4}) is full rank. Consequently, $\bm{x}_2=\bm{0}$, and from (\ref{eq:x_1=0}), it follows that $\bm{x}_1=\bm{0}$, which is a contradiction and therefore concludes the proof. 
\end{proof}
\subsection{Proof of Theorem 3}
\label{proof2}
\begin{proof}
Before stating the proof, we need to analyze the structure of $\bm{E}_4$ and $\bm{C}_2$. Note that each node of the power system is represented by two vertices in the graph representation; we call them the real and the imaginary vertex of the corresponding power system node. By definition, $\bm{E}_4$ is the Jacobian of the algebraic current balance equations with respect to the complex voltages expressed in rectangular coordinates. Therefore, $\bm{E}_4$ has a similar structure to the power system admittance matrix, but the rows corresponding to the nodes with unknown injectors are replaced by zero rows, and the sensitivities of the dynamic models are added. Hence, there exists a free parameter in $\bm{E}_4$ if and only if the real/imaginary current balance equation in a particular node depends on the real/imaginary voltage of another node. It follows that edges exist in the graph of the structured system only between neighboring nodes of the power system. Furthermore, all edges in the corresponding graph are bidirectional, except for the edges originating from the vertices corresponding to the nodes with unknown injectors (the corresponding rows in matrix $\bm{E}_4$ are zero rows). Moreover, if two nodes $i$ and $j$ are connected by a transmission line with conductance $G_{ij} \neq 0$, there exist edges between their real vertices and between their imaginary vertices of the corresponding nodes; if the susceptance $B_{ij} \neq 0$, there exist edges between the real and the imaginary vertices of the corresponding nodes. Shunt elements correspond to either a loop around a vertex itself or an edge from its real/imaginary counterpart and also they ensure the independence of the entries in each row of $\bm{E}_4$. \par
Let us also define the representation of the phasor measurements in the graph of the structured pair. We assign each voltage phasor measurement to the node at which it is placed; the branch current flow measurements are assigned to either of the two nodes defining the branch, and the current injection measurements are assigned to the respective node or to any of its neighboring nodes. This selection is a consequence of the respective measurement functions $\bm{C}_2$ \cite{linear_PMU} and the fact that there can only be one path ending at each $\bm{C}_2$ vertex to satisfy the conditions of Theorem \ref{th:graph}. Note that the edges leading to measurement vertices are unidirectional (see \cite{EC_full}). \par
Now, by Theorem \ref{th:graph}, we only need to show that, under the given assumptions, $\bm{E}_4$ vertices in the graph representation of $(\bm{E}_4, \bm{C}_2)$ can be spanned by a disjoint union of loops and $\bm{C}_2$-topped paths. Indeed, starting from the origin vertices (corresponding to the nodes with unknown injectors), we move in the same way as in the power systems topology, considering that we need to traverse two edges for each transmission line. Thus, whenever we move along a transmission line from node $i$ to node $j$ with ${G}_{ij} \neq 0$, we move from the real vertex of node $i$ to the real vertex of $j$ and from the imaginary vertex of $i$ to the imaginary vertex of node $j$; if ${B}_{ij} \neq 0$, we move from the real to the imaginary and from the imaginary to the real vertex; if ${G}_{ij}{B}_{ij} \neq 0$, we can choose either of the two options. Once a node with an assigned measurement device is reached, we also reach a $\bm{C}_2$ sink vertex for both paths, as we assume that all PMUs measure both real and imaginary components. In this way, all paths stay disjoint. Recall that there are only outward-going edges from a vertex with an unknown injector and only inward-going edges towards a measurement vertex, and they are both traversed in the correct direction. Now, we only need to span the remaining vertices that have not been traversed so far. Those remaining vertices correspond to the nodes with the known injectors' dynamic models or zero injection nodes, hence to nonzero rows in matrix $\bm{E}_4$. As such, they can either be spanned by trivial loops around themselves or around their real/imaginary counterpart vertex. 
\end{proof}
We note that the previous analysis is valid irrespective of the employed dynamic models. To show that, observe that the sensitivities of the dynamic models' current injections are only a function of the voltage of that node. Hence, these values are added to a corresponding $2 \times 2$ submatrix, where the diagonal or off-diagonal entries -- due to at least one incoming transmission line -- enable forming a loop around both vertices.\par

\bibliography{main.bib}
\bibliographystyle{ieeetr}

\newpage

\vfill

\end{document}